\documentclass[letterpaper, 10 pt, conference]{ieeeconf}  

\IEEEoverridecommandlockouts                              

\usepackage{amsfonts,amsmath, amssymb, mathtools, xcolor, hyperref, graphicx,algorithm, bbm, bm, algpseudocode, booktabs, float, optidef,subcaption,cite}

\usepackage{balance}
\usepackage{amsthm}
\usepackage{placeins}

\theoremstyle{plain} 
\newtheorem{theorem}{Theorem}
\newtheorem{lemma}{Lemma}
\newtheorem{proposition}{Proposition}

\theoremstyle{definition} 

\theoremstyle{remark}
\newtheorem{remark}[theorem]{Remark}
\newtheorem*{remark*}{Remark} 
\title{\LARGE \bf On
Online Control of Opinion Dynamics
}

\author{Sheryl Paul$^{1}$, Leslie Cruz Juarez$^{1}$, Jyotirmoy V. Deshmukh$^{1}$, Ketan Savla$^{1}$
\thanks{$^{1}$ University of Southern California,
Los Angeles, California, 90089, USA. {\tt\small \{sherylpa, lcruzjua, jdeshmuk, ksavla\}@usc.edu}}%
}

\begin{document}

\maketitle
\thispagestyle{empty}
\pagestyle{empty}

\begin{abstract}
Networked multi-agent dynamical systems have been used to model
how individual opinions evolve over time due to the opinions of other
agents in the network. Particularly, such a model has been used to 
study how a planning agent can be used to steer opinions in a desired
direction through repeated, budgeted interventions. In this paper,
we consider the problem where individuals' susceptibilities to external 
influences are unknown. We propose an online algorithm that alternates 
between estimating this susceptibility parameter, and using the current 
estimate to drive the opinion to a desired target. We provide conditions 
that guarantee stability and convergence to the desired target opinion
when the planning agent faces budgetary or temporal constraints. Our
analysis shows that the key advantage of estimating the susceptibility 
parameter is that it helps achieve near-optimal convergence to the target
opinion given a finite amount of intervention rounds, and, for a given intervention budget, quantifies how close the opinion can get to the desired target.

\end{abstract}

\section{INTRODUCTION}
In this paper, we study the problem of influencing the preferences of {\em
social agents} through the lens of control design. A social agent is a unit of a
networked multi-agent system, where the opinion or preference of each agent in
the network evolves over time while being influenced by the opinion of the
neighboring agents in the network. Additionally, we assume a designated {\em
planner agent} that can perform interventions (or control actions), which
percolate through the network and may guide agents to a specific desired
opinion. The agent network is assumed to model the {\em susceptibility} of agents to the opinions of other agents as well as the planner agent. The objective of this paper is to study how the planner agent can
estimate the susceptibility parameters of the agent network, and how the learned parameters can be
used to design resource allocation to move
opinions towards a desired value.

There are many real-world applications of such a problem setting. For example, 
consider social settings where agents are individual humans in
a social network, and a central planner may wish to encourage certain kinds of
behaviors such as sustainable consumer behavior, public health awareness, civic
engagement, etc. In these scenarios, influences can be applied through
interventions like advertising, discounts, or targeted messaging. However, these
deployments also come with a cost (e.g. time spent or monetary expense). Thus, we wish to
optimize the cost of control actions while being able to reach the desired
objective in the least amount of time.

This problem of the allocation of influencing interventions in opinion dynamics has been previously studied
with various models. Classical frameworks like DeGroot explain consensus formation under fixed interaction rules \cite{degroot}. Subsequent work (e.g., \cite{acemoglu2011opinion,peralta2022opinion}) studies the implications of Bayesian and non-Bayesian social learning models, highlighting when beliefs converge to truth, and when network structure or agent behavior leads to polarization, or clustering.
The Friedkin–Johnsen model
introduces agent stubbornness \cite{Friedkin_Johnsen_2011}, and has been
extended to study leadership design (tailored to network structure)
\cite{wang_adding_2025}, adversarial settings involving competing stubborn
agents \cite{shrinate_leveraging_2025}, and convergence behavior over
time-varying or concatenated graphs
\cite{proskurnikov_opinion_2017,wang_achieving_2021,lee_unified_2024}.
Optimal control formulations target specific opinion outcomes within budget
or effort constraints \cite{kozitsin2022optimal}, though often under the
assumption that the interaction parameters are known. Recent works emphasize parameter recovery and forecasting e.g., sublinear or Expectation Maximization-style estimators \cite{neumann2024sublinear,monti2020learning} and sociologically informed neural predictors \cite{okawa2022predicting}.

Such Bayesian estimation methods use regret minimization and bandit algorithms that update parameter estimates dynamically upon the arrival of new data. In contrast to the regret-minimization focus of online-learning and bandit methods \cite{abbasi2011improved,bastani2020online,cesa2006prediction} and regret-optimal estimation and control methods \cite{goel2023regret}, we treat opinion shaping as a networked control problem.
Unlike~\cite{okawa2022predicting}, which forecasts opinions with sociologically-informed neural nets, we aim to control an opinion dynamic model to a prescribed target under budget and constraints. 
We show that by alternating parameter identification with a control law (derived analytically), we can drive the system to a prescribed target opinion, while providing explicit convergence-rate and cost bounds.


Our proposed method combines opinion dynamics inspired by the Friedkin-Johnsen model and online control, where we assume that the `susceptibility' parameters for the network are unknown and are estimated. The algorithm alternates between exploration (to maximize parameter accuracy) and exploitation (using an optimal control law to drive opinions towards a desired target under current estimates). In contrast to~\cite{kth}, our control law is derived analytically, and we provide proofs of convergence for state and parameter errors over our two-phase algorithm. We also address a  {\em feasibility} problem: given a time horizon, and budget-constraint, determining if the system can be steered to be within a user-specified error tolerance from the desired target opinion. We summarize our contributions as follows:

\textbf{Contributions. }
(1) We analyze the state invariance, equilibria and convergence of opinion dynamics model, derive analytic control update rules, and prove convergence to the target when the parameters are under defined constraints. We further formulate a finite-horizon feasibility problem under error (distance of the state to the target) and budget constraints. (2) For unknown susceptibility parameters, we propose an online control algorithm that alternates between parameter-estimation and control, ensuring parameter identifiability and performance. We prove that, under this algorithm, the combined parameter and state error converges, and the system state is driven to the desired target. (3) We compare our work through simulations against recent models that perform budget-constrained optimization and those that perform online control via gradient descent. Additionally, we benchmark the cost of parameter-estimation by comparison with the cost of control under known parameters.

The paper is organized as follows: Section 2 introduces the system dynamics and formulates the control and parameter-estimation problems. In Section 3 we analyze system equilibria, stability under static and time-varying inputs, and perform feasibility analysis via convergence rate and control cost. Section 4 presents the parameter identification algorithm with sufficient conditions for correctness, while Section 5 describes the adaptive control algorithm with theoretical convergence guarantees. In Section 6 we report simulation results, and discuss limitations and future directions in Section 7.

\section{PRELIMINARIES}
\noindent{\bf Friedkin-Johnsen Model for Opinion Dynamics.} The opinion dynamics model in our work is inspired by the Friedkin-Johnsen (FJ) model
\cite{fjmodel} which is shown in Equation~\eqref{eq:fjmodel}. 
\begin{equation}
\label{eq:fjmodel}
x(t)=\Lambda(I-L) x(t-1) + (I-\Lambda) \mathbf{d}  
\end{equation}
where  $x(t) \in \mathbb{R}^N$ represents the system state at time $t$,  and $x_i(t) \in \mathbb{R}$ denotes the opinion of agent $i$. The desired opinion vector is denoted by $\mathbf{d} = d\mathbf{1} \in [0,1]^N$,
with the scalar $d \in [0,1]$ representing the target opinion. \( \Lambda \) is a diagonal matrix
representing agents' susceptibility to interpersonal influence, where
$\lambda_{ii}$ represents the susceptibility of agent $i$, and $L$ is the graph-Laplacian of the underlying social network graph. This model has the unique fixed point:
$x^{*}  =  (\Lambda L + I - \Lambda)^{-1}(I-\Lambda) \mathbf{d}$. The quantity $\Lambda L$ encodes how much each agent averages neighbors’ opinions, and $I - \Lambda$ encodes attachment or bias to one's prejudice. 

We note that the FJ model does not include a planner agent that can alter the opinion dynamics. To allow for such an agent, in \cite{kth}, the authors present a modified two timescale model; in our paper, we consider only the discrete-time version of this model, which is 
shown in Equation~\eqref{eq:system_dynamics}.
\begin{equation}
x(t) =  V\big[(I-HU(t)) x(t-1) + HU(t) \mathbf d\big].
\label{eq:system_dynamics}
\end{equation}

In the above formulation, $H=\mathrm{diag}(h_1,\dots,h_N)\succ 0$ is the susceptibility matrix where $h_i$, represents the susceptibility of agent $i$ to the planner's influence. The non-negative control vector $u(t)\in\mathbb{R}^N_{\ge 0}$ represents the influence that the planner exerts on the system, with $u_i(t)$ representing the influence exerted on agent $i$ at time $t$ ($U(t) = \mathrm{diag}(u(t))$.  $V = (\Lambda L+ I - \Lambda)^{-1} (I - \Lambda)$, $V \in \mathbb{R}^{N \times N}$, is defined by the fixed point of the FJ model, where $v_{ij}$ represents the weight agent $i$ places on agent $j$’s opinion. 
In \cite{kth}, the authors argue that the matrix $-(\Lambda L+ I - \Lambda)$ is
Metzler and Hurwitz, and therefore invertible, under the assumption of a strongly connected graph and at least one non-stubborn agent i.e. $\exists i \text{ s. t. }  \lambda_{ii} <1$\footnote{The matrix  is
\textit{Metzler} because $L$ is a Laplacian  and $\Lambda$ is diagonal with
entries in $[0,1]$, preserving the Metzler property. It is \textit{Hurwitz}
because the graph $G$ is connected and has at least one non-stubborn agent ensuring it is positive definite and invertible.}.
    
The interpretation of system dynamics in \eqref{eq:system_dynamics} is that
agents form a convex blend of their prior opinions $x(t-1)$ and the target
$\mathbf{d}$, weighted by $HU(t)$. (We assume $HU(t)\in[0,1]$ (elementwise)  and will justify why this restriction is necessary in the next section.) After this, the social mixing  matrix $V$
propagates these updates across the network. \footnote{Our model applies control at every discrete time step, whereas \cite{kth} uses a two-timescale scheme in which interventions 'campaigns'+) occur at discrete epochs and the system evolves in continuous time between campaigns.}

\textit{Cost definitions and budget.} We assume that the control strategy of the planner incurs some cost. Similar to \cite{kth}, we assume
that these costs are linear functions of the control input\footnote{The cost in a social setting
models the planner purchasing influence (ads, outreach campaigns,
incentives). This, along with simpler comparison to related work acts as our motivation for choosing a linear sum of costs, over the advantages of smoother analysis provided by quadratic costs.}.
W.l.o.g., we interpret $u_i(t)$ as the actual per–agent effort or cost at time
$t$. We can define cost  at time instant $t$ as $c(t) = \sum_{i=1}^n u_i(t)$ and
the cumulative cost over horizon-$T$ as $c_u(T) = \sum_{i=1}^n \sum_{t=1}^T
u_i(t)$. A natural budget constraint is $c_u(T)\le C_{\max}$ for some $C_{\max}>0$.

\noindent{\bf Problem Statement.}
We describe the problems we wish to solve, under the cases of known and unknown parameters. Specifically, we assume the susceptibility matrix $H$ which encodes each agent’s responsiveness to planner interventions is unknown. \footnote{Susceptibility to influence is latent, and cannot typically be directly observed, or reliably inferred through survey data, as individuals rarely estimate their own susceptibility accurately. In contrast, other parameters such as the interaction matrix $V$ can be inferred from observed networks, and the opinion vector $x(t)$ is observable through expressed behavior (clicks, ratings, purchases, or survey/sentiment responses).}  
The problems we solve in this work are stated below, categorized in the known and unknown parameter settings:

\noindent{\textit{Known susceptibility parameters}}.
 \emph{(1) Global asymptotic stability:} We design a control sequence $u(t)$ so that $x=\mathbf d$ is a globally asymptotically stable equilibrium of the system, i.e., $x(t)\to\mathbf d, \text{ as } t \to \infty$.\emph{ (2) Feasibility under error constraints and budget:}
    Given a time horizon $T$, and a target accuracy $\epsilon \geq \|x(t) - \mathbf{d}\|$ and a total cost budget $C_{\max}>0$, we wish to find $u(t)$ that satisfies both the accuracy and budget constraints.\\
\noindent{\textit{Unknown susceptibility parameters}}.
\emph{(3) Online Control:} We employ a parameter estimation algorithm, to estimate the unknown parameter $H$, while simultaneously driving $x$ to the target using adaptive control. We show that, under certain conditions the combined error converges: the state error decreases to zero, while the parameter error remains bounded at a minimum level.

\section{ANALYSIS OF SYSTEM DYNAMICS}
As a starting point, we examine the boundedness of the state trajectory and establish that under the system dynamics, $x(t)$ remains confined to an invariant subset of $\mathbb{R}^n$ for all time.
\begin{theorem}
    If $V \in \mathbb{R}^{n \times n}$ is row-stochastic matrix, i.e., $V\mathbf{1}=\mathbf{1}$ and $v_{ij} \geq 0, \ \forall i,j$, and the control vector $u(t)$ satisfies $0\le h_i u_i(t) \le 1 \  \forall i$, for $\forall t$, assuming that the opinion vector $x(t-1) \text{ lies in the subset } [0,1]^n \text{ with the desired target } \mathbf{d} \in [0,1]^n$, then under the dynamics equation \eqref{eq:system_dynamics} $x(t)$ remains in the set $[0,1]^n$, i.e., the set $[0,1]^n$ is positively invariant under the system dynamics. 
\label{thm:theorem_1}
\end{theorem}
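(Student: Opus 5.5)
The plan is a two-stage convexity argument that follows the structure of \eqref{eq:system_dynamics}. We first show that the pre-mixing vector
\[
y(t) := (I-HU(t))\,x(t-1) + HU(t)\,\mathbf d
\]
lies in $[0,1]^n$. We then show that multiplying by the row-stochastic matrix $V$ keeps the result in $[0,1]^n$.

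\textbf{Step 1 (componentwise convex blend).} Since $H$ and $U(t)$ are diagonal, $HU(t)$ is diagonal with entries $\alpha_i := h_i u_i(t)$. The $i$-th component of $y(t)$ is therefore
\[
y_i(t) = (1-\alpha_i)\,x_i(t-1) + \alpha_i\, d_i .
\]
By hypothesis $\alpha_i\in[0,1]$, and both $x_i(t-1)$ and $d_i$ lie in $[0,1]$. Hence $y_i(t)$ is a convex combination of two points of $[0,1]$. Since $[0,1]$ is convex, $y_i(t)\in[0,1]$. The explicit bounds are immediate: $y_i(t)\ge 0$ because both weights and both points are nonnegative, and $y_i(t)\le (1-\alpha_i)+\alpha_i=1$.

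\textbf{Step 2 (row-stochastic averaging).} The $i$-th component of the new state is $x_i(t)=\sum_j v_{ij}\,y_j(t)$. Because $v_{ij}\ge 0$ and $\sum_j v_{ij}=1$, this is a convex combination of the entries $y_j(t)\in[0,1]$. Concretely,
\[
0\le \sum_j v_{ij}y_j(t)\le \sum_j v_{ij}=1 .
\]
Thus $x(t)\in[0,1]^n$.

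\textbf{Step 3 (invariance).} Steps 1--2 show that one application of the dynamics maps $[0,1]^n$ into itself whenever the control constraint $0\le h_iu_i(t)\le 1$ holds at that step. A straightforward induction on $t$, starting from $x(0)\in[0,1]^n$, then gives $x(t)\in[0,1]^n$ for all $t$, which is positive invariance.

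There is no real obstacle in this argument. The only point needing care is that the hypothesis $h_iu_i(t)\le 1$ is exactly what makes $1-\alpha_i$ nonnegative. Without it, Step 1 would be an affine extrapolation rather than a convex combination, and $y_i(t)$ could leave $[0,1]$. This is precisely why the paper imposes $HU(t)\in[0,1]$ elementwise. One could also remark that the row-stochasticity of $V$ is an assumption of the theorem rather than something derived here. It is consistent with $V=(\Lambda L+I-\Lambda)^{-1}(I-\Lambda)$ being the FJ steady-state map, but the proof does not need that fact.
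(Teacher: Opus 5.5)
Your proof is correct and takes essentially the same approach as the paper. Both arguments form a per-coordinate convex combination of $x_i(t-1)$ and $d$ using $h_iu_i(t)\in[0,1]$, then average convexly with the row-stochastic $V$; your explicit induction from $x(0)$ spells out the ``for all $t$'' conclusion that the paper states directly.
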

\begin{proof}
Since each term $h_i u_i(t) \in [0,1]$, we observe that for each $i$, the quantity $(1 - h_i u_i(t)) x_i(t-1) + h_i u_i(t) d$
is a convex combination of $x_i(t-1) \in [0,1]$ and $d \in [0,1]$ and
\begin{equation}
 \therefore 0 \leq (1 - h_i u_i(t)) x_i(t-1) + h_i u_i(t) d \leq 1 \  \forall i
 \label{eq:exp_bound}\end{equation}
Therefore, the vector $[(I - H U(t)) x(t-1) + H U(t)\mathbf{d}] \in [0,1]^n$.

Now since $x(t) = V[(I - H U(t)) x(t-1) + H U(t) \mathbf{d}]$, and each row $(v_{i1}, \dots, v_{in})$ of $V$ is a probability vector (i.e. $v_{ij} \geq 0$ and $\sum_j v_{ij} = 1$), it follows that each coordinate $x_i(t)$ is a convex combination of entries of the expression $(1 - h_i u_i(t)) x_i(t-1) + h_i u_i(t) d$.

$\therefore \text{From \eqref{eq:exp_bound} it follows that:}$ 
\(
 0 \leq x_i(t) \leq 1, \ \  \forall i.
\)
 Thus, the opinion vector $x(t) \in [0,1]^n, \forall t$.
\end{proof}
\noindent We now address the problems \emph{known susceptibility parameters} setting. First, we establish \emph{global asymptotic stability of the equilibrium} (i.e., $x(t)\to \mathbf d$ as $t\to\infty$) under some bounds on the control input $u(t)$. Second, we formulate and analyze a \emph{finite-horizon feasibility} problem that certifies attainment of a prescribed error (distance from target) within a specified control budget.
\subsection{Stability Analysis under Bounded Control Input}
To establish global asymptotic stability of the equilibrium $x=\mathbf d$, we analyze the existence, uniqueness, and stability of the  equilibria under bounded control inputs.
\begin{theorem}[Stability of the desired equilibrium for admissible initial conditions]
Consider the dynamics in~\eqref{eq:system_dynamics}, if \(x(0), \mathbf{d} \in [0,1]^n\), and if the control vector $u(t)$ satisfies $0 < h_i u_i(t) < 1,\ \forall i, t$, then for all initial conditions $x(0) \in [0,1]^n$, the system converges exponentially to the equilibrium $x^* = \mathbf{d}$.
\label{thm:GAS}
\end{theorem}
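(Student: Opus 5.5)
The plan is to work with the error $e(t) = x(t) - \mathbf d$ and show it contracts in the $\infty$-norm. We use that $V$ is row-stochastic, as in Theorem~\ref{thm:theorem_1}, so $V\mathbf d = \mathbf d$ since $\mathbf d = d\mathbf 1$. Subtracting $\mathbf d$ from both sides of \eqref{eq:system_dynamics} and writing $\mathbf d = V[(I-HU(t))\mathbf d + HU(t)\mathbf d]$ gives the error recursion
\begin{equation*}
e(t) = V\,(I - HU(t))\,e(t-1).
\end{equation*}
This is a linear time-varying system. Its state matrix is the product of a row-stochastic matrix and a diagonal matrix with entries $1 - h_i u_i(t) \in (0,1)$.

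Next I bound the induced $\infty$-norm. Since $V$ is nonnegative with unit row sums, $\|V\|_\infty = 1$. For the diagonal factor, $\|I - HU(t)\|_\infty = \max_i (1 - h_i u_i(t))$. Hence
\begin{equation*}
\|e(t)\|_\infty \le \rho(t)\,\|e(t-1)\|_\infty, \qquad \rho(t) = \max_i \big(1 - h_i u_i(t)\big) < 1 .
\end{equation*}
Iterating gives $\|e(t)\|_\infty \le \prod_{s=1}^t \rho(s)\,\|e(0)\|_\infty$, with $\|e(0)\|_\infty \le 1$ because $x(0), \mathbf d \in [0,1]^n$. Theorem~\ref{thm:theorem_1} keeps the trajectory in $[0,1]^n$, so the setting stays admissible throughout. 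Uniqueness of the equilibrium also follows: any fixed point $x^*$ with a constant admissible $u$ satisfies $e^* = V(I-HU)e^*$ with a strict contraction, forcing $e^* = 0$.

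The main obstacle is turning $\rho(t) < 1$ into a \emph{uniform} exponential rate. The hypothesis $0 < h_i u_i(t) < 1$ pointwise does not by itself stop $h_i u_i(t) \to 0$, which would let $\prod_s \rho(s)$ fail to vanish. I would handle this by reading the statement as requiring a uniform margin: there is $\underline{\gamma} > 0$ with $h_i u_i(t) \ge \underline{\gamma}$ for all $i, t$. This holds automatically for constant or finitely-valued admissible inputs, and I would state the assumption explicitly. Then $\rho(t) \le 1 - \underline{\gamma}$, and
\begin{equation*}
\|x(t) - \mathbf d\|_\infty \le (1-\underline{\gamma})^t \,\|x(0) - \mathbf d\|_\infty,
\end{equation*}
which is exponential convergence to $x^* = \mathbf d$ from every $x(0) \in [0,1]^n$.

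Without the margin, the fallback is the weaker conclusion that the error converges to zero whenever $\sum_t \min_i h_i u_i(t) = \infty$. This follows from $\prod_s (1 - a_s) \le \exp(-\sum_s a_s)$, and it covers the general pointwise hypothesis as stated.
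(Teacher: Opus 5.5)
Your route is the same as the paper's: pass to $e(t)=x(t)-\mathbf d$, use $V\mathbf d=\mathbf d$ to get $e(t)=V(I-HU(t))e(t-1)$, and bound $\|V(I-HU(t))\|_\infty\le 1-\min_i h_iu_i(t)$. The paper's $\eta(t)$ is your $1-\rho(t)$. Row-stochasticity of $V$ is not among this theorem's hypotheses, but it holds for the model's $V=(\Lambda L+I-\Lambda)^{-1}(I-\Lambda)$. Indeed, $L\mathbf 1=0$ gives $V\mathbf 1=\mathbf 1$, and $V\ge 0$ because $\Lambda L+I-\Lambda$ is a nonsingular M-matrix. The paper relies on this implicitly too.

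Where you depart from the paper, you are right. The paper concludes from $\eta(t)\in(0,1)$ alone that $\prod_{t\ge 0}(1-\eta(t))=0$. That step is false, and the theorem as literally stated is false.

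The paper's own exploitation schedule is a counterexample. Take $h_iu_i(t)=ab^t$ for all $i$, and $x(0)=\mathbf d+c\mathbf 1\in[0,1]^n$ with $c\ne 0$. Since $V\mathbf 1=\mathbf 1$, the error is exactly $e(t)=\prod_{s}(1-ab^s)\,c\mathbf 1$. Because $\sum_s ab^s<\infty$, this tends to a nonzero multiple of $\mathbf 1$. This is consistent with the paper's own limit bound $\exp(-a/(1-b))\|x'(0)\|_\infty$, which does not vanish.

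The uniform margin $\underline\gamma$ you impose is exactly the assumption $\eta=\inf_t\min_i h_iu_i(t)>0$ adopted in the paper's joint-spectral-radius remark. With it, your bound $(1-\underline\gamma)^t\|x(0)-\mathbf d\|_\infty$ is a correct proof of exponential convergence.

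One overstatement needs fixing. Your last sentence says the fallback ``covers the general pointwise hypothesis as stated''; it does not. The condition $\sum_t\min_i h_iu_i(t)=\infty$ is an additional assumption, not implied by $0<h_iu_i(t)<1$. Even under it, convergence need not be exponential. With uniform $h_iu_i(t)=1/(t+2)$ and $e(0)=c\mathbf 1$, the error is exactly $c\mathbf 1/(t+1)$. So the fallback gives convergence, but not the theorem's exponential rate.
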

\begin{proof}
    Let us define the deviation from equilibrium as $x'(t) = x(t) - \mathbf{d}$. 
    Since $\mathbf{d}$ is constant:
\(\mathbf{d} = V[(I - H U(t)) \mathbf{d} + H U(t) \mathbf{d}] = V \mathbf{d}. \)
Subtracting $\mathbf{d}$ from both sides of the system dynamics Eq.~\eqref{eq:system_dynamics}:
we get
\[
x'(t) = V (I - H U(t)) x'(t-1).
\]
Let us define $A(t) = V(I - H U(t))$. Then the update becomes:
    \[x'(t) = A(t-1) x'(t-1) = A(t-1) A(t-2) \cdots A(0) x'(0).\]

\noindent Let $\eta(t) = \min_i h_i u_i(t)$, so that $0 < \eta(t) < 1$. Then:
\begin{equation}\label{eq:at}
\begin{aligned}
\|A(t)\|_\infty &\le 1-\eta(t) < 1,\\
\|x'(t)\|_\infty &\le \Bigg(\prod_{s=0}^{t-1} (1-\eta(s))\Bigg)  \|x'(0)\|_\infty.
\end{aligned}
\end{equation}
Since $ \eta(t) \in (0,1), \ \prod_{t=0}^\infty (1 - \eta(t)) = 0$
\text{ and therefore, } $\|x'(t)\|_\infty \to 0.$
 Thus, we have $\lim_{t \to \infty} x(t) = \mathbf{d}.$
 
If $u(t) = u$ is constant, then so is $A = V(I - H U)$. By the assumption $0 < h_i u_i < 1$, we have:
\(\|I - H U\|_{\infty} = \max_i |1 - h_i u_i| < 1
\ \text{so} \ \|A\|_\infty \leq \|V\|_\infty \|I - H U\|_\infty < 1,\) and we can similarly show that $\lim_{t \to \infty} x(t) = \mathbf{d}.$\end{proof}
\begin{remark}[Joint Spectral Radius Argument.]
    We could also show this result using an argument with the joint spectral radius of $A(t)$. Define
$\mathcal{M}=\{\,V(I-HU):\ U=\mathrm{diag}(u),\ 0< h_i u_i \le 1\,\}$, 
$\eta=\inf_t \min_i h_i u_i(t)>0$. Then \(\|A\|_{\infty}\le 1-\eta\) for all \(A\in\mathcal{M}\), so \(\rho(\mathcal{M})\le 1-\eta<1\).
By standard results \cite{jungers2009jsr}, for any sequence \(A(t)\in\mathcal{M}\), $x'(t)=A(t-1)\cdots A(0)\,x'(0)\to 0 $\ \text{exponentially}, hence \(x(t)\to x^*\).
\end{remark}
    \begin{remark}[System behavior without control input.]
          If \( U = 0 \), the system reduces to the standard consensus dynamics~\cite{degroot-consensus}:
\(x(t) = V x(t-1)\). If \( x(t-1) = \mathbf{d} \), then \( x(t) = \mathbf{d} \). That is, the system remains at equilibrium. However, if $x(t-1) \neq \mathbf{d}$, then the system converges to a weighted average of the initial state components instead, and $\mathbf{d}$ does not influence the convergence.
  
\end{remark}
Having established convergence to the desired equilibrium for admissible initial conditions under bounds on the control input, we go on to characterize the convergence rate for such inputs and quantify the corresponding control cost.
\subsection{Feasibility Analysis of Control Cost and Accuracy}
\textbf{Rate of Convergence. }
We consider a time-varying control input of the form $u(t) $, which enforces a uniform rate of convergence $r(t)  = a \cdot b^t\in (0,1)$, with $0<a,b< 1$ for all agents, s.t. $h_i u_i(t) = r(t)$. Thus, the individual control input for each agent can be set as 
\begin{equation}
    u_i(t) = \frac{r(t)}{h_i} \label{eq:exp-control}
\end{equation} From \eqref{eq:at} 
\(
\|x'(t)\|_\infty \leq \|x'(0)\|_\infty \cdot \prod_{s=0}^{t-1} (1 - a b^s).
\)
Using the inequality \( 1 - y \leq e^{-y} \) for \( y \in (0,1) \), we bound the product as:
\(
 \|x'(t)\|_\infty \leq \exp\left( - \frac{a(1 - b^t)}{1 - b} \right) \cdot\|x'(0)\|_\infty\) \(\text{so } \lim_{t \to \infty} \|x'(t)\|_\infty \leq \exp\left( - \frac{a}{1 - b} \right) \cdot \|x'(0)\|_\infty
\).

And the distance from equilibrium $\epsilon$, at time $T$ can be bounded by:
\begin{equation}
\epsilon = \|x'(T)\|_\infty \leq \exp\left( - \frac{a(1 - b^T)}{1 - b} \right) \cdot \|x'(0)\|_\infty
\label{eq:xt-eq-dist-bound}
\end{equation}
 We define {\em non-uniform control} as when each agent has a specified control input depending on their individual susceptibility parameter $h_i$, and {\em uniform control} as when the same control input (say $u_c(t)$) is applied to all agents i.e., $u_i(t)$ $ = u_c(t), \  \forall i$.
\footnote{In the special case of a constant control input $u(t)=u$, convergence can be regulated by fixing a rate $r\in(0,1)$ and setting $u_i=r/h_i$, which yields $\|x'(t)\|_\infty \le (1-r)^t \|x'(0)\|_\infty$. In the case of uniform control, we impose bounds $h_{\min}\le h_i\le h_{\max}$ and select $u_c = r/h_{\max}$. This guarantees $h_i u_c \in [r h_{\min}/h_{\max},   r]\subset(0,1)$ for all $i$, leading to the contraction bound
\(
\epsilon = \|x'(t)\|_\infty  \le  \Bigl(1-\tfrac{r h_{\min}}{h_{\max}}\Bigr)^t \|x'(0)\|_\infty.
\)}
\noindent{\bf Cost of Control.}
We now compute the associated cost of control, making explicit the trade-off between convergence speed and control expenditure.
 The expression for cost can be given as:
\begin{equation}\label{eq:control-cost}
\begin{aligned}
c_u(T)
&= \sum_{t=0}^{T-1} \sum_{i=1}^n \frac{r(t)}{h_i}
 = \sum_{t=0}^{T-1} \sum_{i=1}^n \frac{a b^t}{h_i}
 = \frac{a(1 - b^T)}{1 - b}   S, \\
S
&= \begin{cases}
\displaystyle \sum_{i=1}^n \frac{1}{h_i} & \text{(non-uniform control)}\\[4pt]
\displaystyle \frac{n}{h^{\max}} & \text{(uniform control)}
\end{cases}
\end{aligned}
\end{equation}
In both cases, the cost depends on the same bilinear term $a(1-b^T)/(1-b)$ that also governs convergence speed.

\noindent{ \bf Minimum Control Cost for Finite-Time $\epsilon$-Accuracy.}
 Firstly, in order to ensure at time $t$, that the state $x(t)$ is  at most $\epsilon$ away from the equilibrium, from \eqref{eq:xt-eq-dist-bound} we have: $a (1 - b^t) \geq \left(1-b\right)  \log\left( \frac{\|x'(0)\|_\infty}{\epsilon} \right)$. Secondly, to satisfy a budget constraint, i.e., $c_u(t) \leq C_{\max}$, from equation~\eqref{eq:control-cost} we get: $a(1-b^t) \leq (1-b)\frac{C_{\max}}{S}$.
These inequalities aid us in framing and solving the problem of analyzing the feasibility of achieving $\epsilon-$accuracy under given time and budget constraints.

\noindent\textit{Feasibility Analysis.}
Given a finite horizon $T$, minimum accuracy requirement $\epsilon>0$, initial error $\|x'(0)\|_\infty$, budget limit $C_{\max}$, and the sum of susceptibility parameters $S = \sum_{i = 0}^n1/ h_i$, we seek to determine whether there exist $a,b \in (0,1)$ such that the control schedule
\(
u(t) = a b^t/h^i
\)
satisfies both the accuracy and budget constraints. This amounts to solving the constraint system:
\begin{equation}
   \frac{C_{\max}}{S}\geq a \frac{1-b^T}{1-b}  \geq  \log\!\left(\frac{\|x'(0)\|_\infty}{\epsilon}\right), 
   \  0<a,b<1.
   \label{eq:constraint-3}
\end{equation}
A solution exists if:
\begin{itemize}
\item
\textit{Condition 1.} Relating the error $\epsilon$ to the budget $C_{\max}$ we require:
\(
\epsilon  \ge  \|x'(0)\|_\infty    e^{-\frac{C_{\max}}{S}}.
\)

\item \textit{Condition 2.} We can rewrite~\ref{eq:constraint-3} as:
\[
\log\!\Big(\tfrac{\|x'(0)\|_\infty}{\epsilon}\Big)  \frac{1-b}{1-b^{T}}
\ \le\ a \ \le\
\frac{C_{\max}}{S}  \frac{1-b}{1-b^{T}} .
\]
Under Condition 1 the interval is nonempty; to also satisfy $a<1$ it suffices that $\frac{C_{\max}}{S}  \frac{1-b}{1-b^{T}} \ \le\ 1$.

\end{itemize}
Obtaining the value of $b$ requires solving a polynomial equation: $a  (1-b^{T})  -  \frac{C_{\max}}{S}  (1-b) = 0$ or $a  (1-b^{T})  -  \log\!\Big(\tfrac{\|x'(0)\|_\infty}{\epsilon}\Big)  (1-b) = 0.$
Hence, feasibility reduces to verifying these conditions, which can be checked analytically or with standard nonlinear solvers.

Thus, we have established the convergence of the system dynamics and analyzed the feasibility problem in the case of known parameters. However, if the susceptibility parameter $H$ is unknown,
estimating it enables us to design $u(t)$ accurately enough for faster convergence, while respecting the bounds $0<h_i u_i<1$ and budget constraints. Although convergence can be achieved without exact estimation of $H$, more accurate estimates allow tighter enforcement of the input bounds, thereby improving the convergence rate. This motivates the parameter identification algorithm that we introduce in the next section.

\section{PARAMETER ESTIMATION}
 In this section, we adapt standard parameter-estimation techniques for discrete-time systems following the framework presented in the literature \cite{Ioannou.Fidan:06} (Ref. Chapter 4) to identify the susceptibility parameter $H$. 
In such techniques, unknown model parameters are updated online using error signals between the plant and a reference/prediction model. An adaptation law (e.g. gradient, Lyapunov-based, or least-squares) adjusts the estimates to maintain stability and desired performance.
The canonical setup in this framework considers systems of the form:
\begin{equation}
\label{eq:adapcontrol}
x(t) = \tilde{F_t}(x(t-1), u(t-1))+ {F}_t(x(t-1), u(t-1)) \theta,
\end{equation}
where $\tilde{F_t}$ and $F_t$ are assumed to be known functions \footnote{For convenience we denote them without the arguments as $\tilde{F_t}$ and $F_t$ henceforth.}, and $\theta \in \mathbb{R}^n$ is an unknown parameter vector that appears linearly with the regressor matrix $F_t$.

We reformulate our system to match the canonical form to show how we can apply
the adaptive control technique. We denote the parameter $h_i$ as $\theta_i$ and
$H$ as $\Theta$ henceforth. We observe that the following substitutions allows us to rewrite
\eqref{eq:system_dynamics} in the form of \eqref{eq:adapcontrol}:
\[
\begin{array}{l@{\hspace{1em}}l}
\tilde{F_t}:= Vx(t-1)  &
F_t:=V \mathrm{diag}\big(u(t) \circ(\mathbf{d} - x(t-1))\big)
\end{array}
\]
where $U(t) = \mathrm{diag}(u(t))$ and $\circ$ denotes elementwise multiplication.
The parameter-estimation method prescribes computing a state prediction $\hat{x}(t)$, and updating the parameter estimate at time $t$, i.e., $\hat{\theta(t)}$ as follows:
\begin{equation}\label{eq:param-est-algo}
\begin{aligned}
\hat{x}(t) &= \tilde{F}_t + F_t \hat{\theta}(t-1),\\
\hat{\theta}(t) &= \hat{\theta}(t-1) + \Psi F_t^\top\ \big(x(t)-\hat{x}(t)\big),\  \Psi=\psi I,\ \psi>0.
\end{aligned}
\end{equation}
 where $\Psi $ is a gain matrix. Intuitively, this update corrects the estimate in the direction of the state prediction error, weighted by the regressor $F_t$. Let the parameter and state estimation errors be defined as:
\(\theta^{\text{err}}(t) := \theta - \hat{\theta}(t),\ \text{and} \ x^{\text{err}}(t) := x(t) - \hat{x}(t)\). Substituting in the above dynamics yields:
\begin{equation}
x^{\text{err}}(t) = F_t \theta^{\text{err}}(t-1) \ \ \text{and} \ \
\theta^{\text{err}}(t) = (I - \Psi F_t^\top F_t)\theta^{\text{err}}(t-1)
\label{eq:theta-err}
\end{equation}
Thus, the updates in~\eqref{eq:param-est-algo} are applied iteratively at each timestep, using the observed state to refine the parameter estimate. The mechanism works by adjusting $\hat{\theta}(t)$ in the direction that would reduce the observed state prediction error, effectively steering the parameters to better match the system dynamics over time.
We now show that the parameter error under this method converges to zero, using Lyapunov analysis.

\textbf{Parameter Error Analysis.}
To show that the parameter-error reduces geometrically, we begin by introducing a quadratic Lyapunov function:
$R(t) := \frac{1}{2} \theta^{\text{err}}(t)^\top \Psi^{-1} \theta^{\text{err}}(t).$
Using the error dynamics \eqref{eq:theta-err}, we derive conditions under which $R(t)$ decreases monotonically. The analysis shows that this requires bounds on $\|F_t\|$ and an appropriate choice of the gain parameter $\psi$. By bounding the Lyapunov difference $R(t) - R(t-1)$, we prove that the parameter error decreases at a geometric rate: $\|\theta^{\text{err}}\| \leq K.(1-\kappa)^{\frac{t}{2}}$, where $\kappa >0$ and $K>0$ are constants.

\begin{lemma}[Convergence of Parameter Estimation Error]
\label{lem:PE-contraction}
Suppose the following conditions hold:
\begin{enumerate}
    \item \textit{Boundedness:} There exists $\beta>0$ such that $\|F_t\|\le 
    \beta$ for all $t$, and the adaptation gain satisfies $\psi<2/\beta^2$.
    \footnote{Since $V$ is fixed, $x(t),\mathbf{d}\in[0,1]^n$ and 
    $0< h_i u_i(t)< 1$, it is possible to bound 
    $\|F_t\|\le \|V\| \|u(t)\circ(\mathbf{d}-x(t))\|_\infty \leq \beta$. 
    The parameter $\psi$ is a design choice and can be selected to ensure 
    $\psi < \frac{2}{\beta^2}$.}
    \item \textit{Persistent Excitation:} The regressor $F_t$ is persistently 
    exciting, i.e., there exists $\alpha>0$ such that 
    $\forall t$: $F_t^\top F_t\succeq \alpha^2 I$. 
\end{enumerate}
Then the following hold:
\begin{enumerate}
    \item For \eqref{eq:theta-err}, the function 
    $R(t):=\tfrac{1}{2}{\theta^{\mathrm{err}}(t)}^\top \Psi^{-1} 
    \theta^{\mathrm{err}}(t)$ is a discrete-time Lyapunov function.
    \item The rate of decrease of $\theta^{\text{err}}$ satisfies:
    $\|\theta^{\text{err}}(t)\| \leq \sqrt{\frac{2}{\beta^2} R(0)} 
    \left(1 - \frac{\alpha^2}{\beta^2}\right)^{t/2}$.
\end{enumerate}
\end{lemma}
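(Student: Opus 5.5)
The plan is a direct one-step Lyapunov computation on the error recursion \eqref{eq:theta-err}. Fix $t$ and write $e:=\theta^{\mathrm{err}}(t-1)$ and $G:=F_t^\top F_t$, which is symmetric positive semidefinite. Since $\Psi=\psi I$, we have $R(t)=\tfrac{1}{2\psi}e^\top (I-\psi G)^2 e$. Expanding the square gives
\[
R(t)-R(t-1)=\tfrac{1}{2\psi}e^\top\big[(I-\psi G)^2-I\big]e=-\tfrac12\, e^\top G\,(2I-\psi G)\,e .
\]
The matrices $G$ and $2I-\psi G$ commute, so I will diagonalize $G$ and argue eigenvalue by eigenvalue.

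\emph{Part 1 (Lyapunov property).} By the boundedness hypothesis, $\|G\|\le\|F_t\|^2\le\beta^2$, so every eigenvalue $g$ of $G$ lies in $[0,\beta^2]$. The condition $\psi<2/\beta^2$ then gives $2-\psi g>0$, hence $G(2I-\psi G)\succeq 0$ and $R(t)\le R(t-1)$. Also $R\ge 0$, with $R=0$ only at $\theta^{\mathrm{err}}=0$. This shows $R$ is a discrete-time Lyapunov function. Persistent excitation sharpens this: it gives $g\ge\alpha^2>0$, so $G(2I-\psi G)\succ 0$ and the decrease is strict whenever $e\neq 0$.

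\emph{Part 2 (rate).} Again diagonalizing $G$, I will bound
\[
R(t)\le \max_{g\in[\alpha^2,\beta^2]}(1-\psi g)^2\, R(t-1).
\]
Iterating gives $R(t)\le\rho^t R(0)$ with $\rho:=\max_{g\in[\alpha^2,\beta^2]}(1-\psi g)^2$. Converting back to the norm uses $\|\theta^{\mathrm{err}}(t)\|^2=2\psi R(t)$.

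\emph{Main obstacle.} The difficulty is matching the stated constants $\sqrt{2R(0)/\beta^2}$ and $(1-\alpha^2/\beta^2)^{t/2}$, and these do not hold for every $\psi<2/\beta^2$. For very small $\psi$ the contraction factor $\rho$ approaches $1$, so the rate $1-\alpha^2/\beta^2$ cannot hold uniformly. The prefactor $\sqrt{2\psi R(0)}$ coincides with the stated $\sqrt{2R(0)/\beta^2}$ exactly when $\psi=1/\beta^2$. I would therefore prove the statement for the normalized gain $\psi=1/\beta^2$, which satisfies the gain condition. With this choice $1-\psi g\in[0,\,1-\alpha^2/\beta^2]$, so $\rho\le(1-\alpha^2/\beta^2)^2\le 1-\alpha^2/\beta^2$, using $\alpha\le\beta$, which follows from $\alpha^2 I\preceq G\preceq\beta^2 I$. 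Then
\[
\|\theta^{\mathrm{err}}(t)\|\le\sqrt{\tfrac{2}{\beta^2}R(0)}\,\big(1-\tfrac{\alpha^2}{\beta^2}\big)^{t/2}.
\]
For a general $\psi\in(0,2/\beta^2)$, the same argument gives geometric convergence with factor $\rho=\max\{(1-\psi\alpha^2)^2,(1-\psi\beta^2)^2\}<1$ and prefactor $\sqrt{2\psi R(0)}$. I would record this as a remark, noting that it reduces to the stated bound at $\psi=1/\beta^2$.
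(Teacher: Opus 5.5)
Your proposal is correct and follows essentially the same route as the paper. Both arguments compute a one-step Lyapunov decrease from the error recursion \eqref{eq:theta-err} (your identity $R(t)-R(t-1)=-\tfrac12 e^\top G(2I-\psi G)e$ is exactly the paper's $-x^{\mathrm{err}}(t)^\top M_t\, x^{\mathrm{err}}(t)$ with $M_t=I-\tfrac12 F_t\Psi F_t^\top$), iterate it, and specialize to $\psi=1/\beta^2$ to recover the stated constants. The only difference is that you bound the decrease through the spectrum of $I-\psi F_t^\top F_t$ rather than through $\lambda_{\min}(M_t)\ge 1-\tfrac12\psi\beta^2$ combined with persistent excitation, which yields a contraction factor $\max\{(1-\psi\alpha^2)^2,(1-\psi\beta^2)^2\}$ that is never worse than the paper's $1-\kappa$ with $\kappa=2\psi(1-\tfrac12\psi\beta^2)\alpha^2$.
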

\begin{proof}
By definition of $R(t)$ and using Eq.~\eqref{eq:theta-err}, we can show:
\begin{equation}\label{eq:r-lyapunov} \scalebox{1}{$
  R(t) - R(t-1) = -{x^{\text{err}}(t)}^\top M_t\, x^{\text{err}}(t);\ M_t := I - \tfrac{1}{2} F_t \Psi F_t^\top
$}
\end{equation}
\noindent Note that $\forall t$, $R(t)$ is positive. To establish that $R(t)$ is a discrete-time Lyapunov function, we show that for all $t$, $R(t) - R(t-1)$ is strictly negative by showing that $M_t \succ 0$.
Since $\Psi$ is a symmetric p.d. matrix, by the norm-eigenvalue inequality, we
have $$\lambda_{\max}(F_t\Psi F_t^\top)\le \lambda_{\max}(\Psi)\|F_t\|^2$$
Since
$\lambda_{\max}(\Psi) = \psi$, and from Condition~1, $\|F_t\|^2 < \beta^2$, we get that $\lambda_{\max}(F_t\Psi F_t^\top)\leq\psi\beta^2$. 

Applying Weyl's inequalities to bound the eigenvalues of $I - \tfrac{1}{2} F_t\Psi F_t^\top$, we 
get:
\[
\lambda_{\min}(M_t) = 1-\tfrac12 \lambda_{\max}(F_t\Psi F_t^\top)\ \ge\ 1-\tfrac12 \psi\beta^2.
\]
Also by Condition~1, $\tfrac{1}{2}\psi\beta^2 < 1$, which means
that the min eigenvalue of $M_t$ is positive, proving that
$M_t \succ 0$. Next, we establish the convergence rate of 
$\|\theta^{\text{err}}(t)\|$. Using norm-eigenvalue inequality
we have
\begin{equation}
{x^{\mathrm{err}}(t)}^\top M_t x^{\mathrm{err}}(t)
\ge \lambda_{\min}(M_t)\|x^{\mathrm{err}}(t)\|^2.
\label{eq:x-error}
\end{equation}

\noindent From Eq.~\eqref{eq:theta-err}, $x^{\text{err}}(t) = F_t 
\theta^{\text{err}}(t-1)$, and from Condition~2, 
$$F_t^\top F_t \succeq \alpha^2 I, \text{ so } \|F_t\theta^{\text{err}}(t-1)\|^2 
\ge \alpha^2\|\theta^{\text{err}}(t-1)\|^2$$
\[
\text{And } {x^{\mathrm{err}}(t)}^\top M_t x^{\mathrm{err}}(t) \ge 
\Bigl(1-\tfrac12\psi\beta^2\Bigr)\alpha^2\|\theta^{\mathrm{err}}(t-1)\|^2.
\]
Now, to bound $\theta^{\mathrm{err}}(t-1)$, we consider the definition of $R(t-1)=\tfrac12 \theta^{\mathrm{err}}(t-1)^\top \Psi^{-1}\theta^{\mathrm{err}}(t-1)$. Since $\Psi = \psi I$, $R(t-1) =\tfrac{1}{2\psi}\|\theta^{\mathrm{err}}(t-1)\|^2$, and we have $\|\theta^{\mathrm{err}}(t-1)\|^2=2\psi R(t-1)$. Substituting in Eq.~\eqref{eq:x-error}: $${x^{\mathrm{err}}(t)}^\top M_t x^{\mathrm{err}}(t)
\ \ge\ 2\psi\Bigl(1-\tfrac12 \psi\beta^2\Bigr)\alpha^2 R(t-1)$$
 Substituting in \eqref{eq:r-lyapunov} we get:
\begin{equation}
\label{eq:rt-kappa}
    \begin{aligned}
&R(t) \le (1-\kappa) R(t-1)\ \text{ and } R(t) \le (1-\kappa)^t R(0) \\
\end{aligned}
\end{equation}
$\text{ where }\kappa := 2\psi\!\left(1-\tfrac{1}{2}\psi\beta^2\right)\!\alpha^2 > 0$. Using the definition of $R(t)$, we can simplify this to:
  \begin{equation}
    \label{eq:thetaerrbound}
    \|\theta^{\text{err}}(t)\| \leq \sqrt{2 \psi R(0)} (1 - \kappa)^{t/2}.
  \end{equation}
As $\psi$ is a design parameter, we observe that setting $\psi =
\frac{1}{\beta^2}$ yields the smallest value for $(1-\kappa)$, which in turn
yields the tightest upper bound on $\|\theta^{\text{err}}(t)\|$. Substituting
$\psi = \frac{1}{\beta^2}$ in \eqref{eq:thetaerrbound} yields:
 $\|\theta^{\text{err}}(t)\| \leq \sqrt{\frac{2}{\beta^2} R(0)} \left(1 -
 \frac{\alpha^2}{\beta^2}\right)^{t/2}$.
 \end{proof}

We go on to discuss persistency of excitation (PE) (Condition 2), and the choice of $u(t)$ to maintain it.

\noindent\textbf{Persistency of Excitation Conditions.}
To enforce the PE condition $\lambda_{\min}(F_t^\top F_t)\ge \alpha^2$ 
at each time instant, we expand using $F_t=V\mathrm{diag}(y(t))$ 
with $y(t)=u(t)\circ(\mathbf{d}-x(t))$, which gives
\(\lambda_{\min}\big(F_t^\top F_t\big)
\ge \big(\min\nolimits_j |y_j(t)|\big)^2 \lambda_{\min}\big(V^\top V\big).\)
Thus, a sufficient condition to ensure PE is: 
$\min_{j}|y_j(t)|\ \ge\  \frac{\alpha}{\underline{\lambda}_V}$ where $\underline{\lambda}_V = \sqrt{\lambda_{\min}(V^\top V)}$.

 Since $y_j(t)=u_j(t)|x_j(t)-d|$, we must keep the deviation $|x_j - d|$ bounded away from zero. 
 So, we enforce a margin: $|x_j(t) - d| \geq \delta(t), \ \forall j$ where $\delta(t) \in (0,1]$, i.e., every component of the state $x(t)$ must be at least $\delta$ away from the target $\mathbf{d}$ to preserve the PE lower bound.
 We choose $\delta(t)$ that guarantees this for all $u_j(t)$ using a conservative bound on $\theta$ from its estimate and error. Define $\theta^{\max}(t)=\hat\theta(t)+\theta^{\mathrm{err}}(t)$ for a given $\alpha$ we can set:
    \begin{equation}
      \delta=\frac{\alpha}{\underline{\lambda}_V \, u^{\max}(t)} \text{ where  }u^{\max}(t)=\min_j 1/\theta^{\max}_j(t)
      \label{eq:alpha-delta-init}
    \end{equation}
 Thus PE holds by keeping $x(t)$ outside a neighborhood of radius $\delta(t)$, with $\alpha$ and $\delta(t)$ chosen to respect input bounds. We now present an online algorithm that estimates $\theta$ while steering $x(t) \to \mathbf d$.
 


\section{ONLINE CONTROL ALGORITHM}
\begin{algorithm}[t]
\caption{Online Control Algorithm}
\label{alg:two-phase}
\begin{algorithmic}[1]
\State \textbf{Input} Target $\mathbf d$, social network matrix $V$ \label{line:input}
\State \textbf{Initialize} adaptation gain $\Psi$, $\alpha_0 \text{ and }\ \delta_0$ (via \eqref{eq:alpha-delta-init}), shrink factor $c_\delta$, parameter-estimate $\hat\theta(0)$
\State Set $m\gets 0$, $t_m\gets 0$ \label{line:init}
\State \textbf{Observe} $x(0)$ \label{line:obs0}
\While{$\|x(t_m)-\mathbf d\|$ has not converged} \label{line:outer-while}
 
  \While{$|x_j(t_m)-d|>\delta_m, \ \forall j$ and $\alpha_m> \alpha_{\min}$} \label{line:explore-while}
    \State $u_j(t_m)\gets \frac{\alpha_m}{\underline{\lambda}_V|x_j(t_m)-d_j|}, \ \forall j$ \label{line:explore-u}
    \State \textbf{Predict} $\hat x(t_m + 1)$; \textbf{observe} $x(t_m + 1)$ \label{line:pred-obs-explore}
    \State \textbf{Parameter update} $\hat\theta(t_m + 1)$ via~\eqref{eq:param-est-algo}\label{line:param-update}
    \State $t_m\gets t_m + 1$ \label{line:t-increment-explore}
  \EndWhile
   \State Update parameter bound: $\theta^{\max}\gets \hat\theta(t_m)+\theta^{\mathrm{err}}(t_m)$ \label{line:theta-max}
  \While{$|x_j(t_m) - d| \geq \gamma \delta_m, \ \forall j$ } \label{line:exploit-for}

    \State Set contraction rate: $r(t_m) = ab^{t_m}$
    \State $u_j(t_m)\gets \tfrac{r(t_m)}{\theta^{\max}_j},\ \forall j$ \label{line:exploit-u}
    \State \textbf{Observe} $x(t_m+1)$;\quad $t_m\gets t_m+1$ \label{line:obs-exploit}
  \EndWhile
  \State \textbf{Update schedule} $\delta_{m+1},\ \alpha_{m+1}$ as in~\eqref{eq:alpha_delta}\label{line:update-schedule}
  \State $m\gets m + 1$ \label{line:inc-cycle}
\EndWhile
\end{algorithmic}
\end{algorithm}

\noindent Our algorithm alternates between ``exploration'' and ``exploitation'' phases indexed by $m$:  Let the number of steps in the exploration and exploitation phases of cycle $m$ be $K^m_\theta$, and $K_x^m$ resp. Each cycle $m$ is additionally characterized by a neighborhood radius $\delta_m$ and a PE level $\alpha_m$:

\noindent (i) \emph{Exploration:} executed while $|x_j(t_m) - d| > \delta_m, \ \forall j$  to guarantee PE and drive $\hat\theta(t_m)\to\theta$. We develop a controller that ensures PE:
    \begin{equation}
        u_j(t) = \operatorname{clip}_{[0,\,1/\theta_j]} \left(\frac{\alpha}{\underline{\lambda}_V  |x_j(t)-d|}\right), \ \forall j\label{eq:PE-control}
    \end{equation}
\noindent (ii) \emph{Exploitation:} executed once $ \exists j, \text{ s.t. } |x_j(t_m) - d| 
\leq\delta_m$ using the analytic control input (Eq.~\eqref{eq:exp-control}).
To prevent over-contraction in this phase and preserve a nontrivial $\delta_{m+1}$ for the next cycle's exploration phase, we impose: $|x_j(t_m)-\mathbf{d}| \ge \gamma \delta_m, \ \forall j $ with $\gamma\in(0,1)$.

 \noindent For a chosen $\alpha_0$, we initialize $\delta_0$ using Eq.~\eqref{eq:alpha-delta-init}. At the end of each cycle $m$ (after the exploitation phase), to guarantee re-entry into the exploration phase in the next cycle, we shrink $\delta_m$ by some factor and set:
\begin{equation}
\delta_{m+1}=c_\delta (\gamma \delta_m) \text{ and } \alpha_{m+1}=\underline{\lambda}_V  u^{\max} (t^{\mathrm{end}}_{m}) \delta_{m+1}
\label{eq:alpha_delta}
\end{equation}
where $c_\delta\in(0,\tfrac12]$ is a shrinking factor, and $t^{\mathrm{end}}_{m}$ denotes the time at the end of cycle $m$. This rule ensures $|x_j(t_{m+1})-\mathbf d|> \delta_{m+1}, \ \forall j$, so the next cycle begins in exploration with a PE level that remains feasible under the input bounds.

\noindent We carry out the exploration phase until $\alpha_m>\alpha_{\min}$ where $\alpha_{\min} \in (0,1)$. Keeping $\alpha>\alpha_{\min}$ forces $|x_j(t_m) - d| > \delta_m, \forall j$ hindering the state convergence to the target i.e. $\|x(t)-\mathbf d\| \to 0$. Upon reaching $\alpha_{\min}$, we stop the exploration phases and define the parameter error at this point as $\theta^{\mathrm{err}}_{\min}$. Let
 $m_\star=\min\{m:\alpha_m\le\alpha_{\min}\}$, define 
\begin{equation}
\theta^{\mathrm{err}}_{\min}=\theta^{\mathrm{err}}\bigl(t^{\mathrm{end}}_{m_\star}\bigr),\  
R_{\min}=\tfrac12\,\theta^{\mathrm{err}}_{\min}{}^\top \Psi^{-1}\theta^{\mathrm{err}}_{\min} \label{eq:theta-min-r-min}
\end{equation}
We continue exploitation letting $\delta_m\to0$ to drive $x(t)\to\mathbf d$.

We can now present our online control algorithm~\ref{alg:two-phase}.
and describe it as follows: Lines \ref{line:input}–\ref{line:obs0} specify the problem inputs and perform initialization: the current state is observed, the parameter-estimate and other hyperparameters are set. Lines \ref{line:explore-while}–\ref{line:t-increment-explore} execute the exploration phase while $|x_j(t_m)- d|_\infty>\delta_m, \ \forall j$ and $\alpha_m >\alpha_{\min}$. The control input ensuring the $PE$ condition is applied in line \ref{line:explore-u} and the parameter-estimate is updated via line \ref{line:param-update}. Lines \ref{line:exploit-for}–\ref{line:obs-exploit} perform exploitation within the bounds $\gamma\delta_m \le |x_j(t_m)- d| \le \delta_m, \ \forall j$  using the analytic control with the conservative bound $\theta^{\max}$ (lines \ref{line:theta-max}–\ref{line:exploit-u}).
At the end of the cycle, the neighborhood radius $\delta_m$ and PE level $\alpha_m$ are updated in line \ref{line:update-schedule} using Eq. \eqref{eq:alpha_delta}, and the cycle counter advances (line \ref{line:inc-cycle}).

Once $\alpha_m \leq \alpha_{\min}$, the exploration phase in each cycle, is no longer carried out. However, the exploitation phase (lines~\ref{line:exploit-for}-\ref{line:obs-exploit}), keeps repeating and as $\delta_m \to 0$, the state error $\|x(t_m)- \mathbf{d}\| \to 0$, as our control input (line \ref{line:exploit-u}) satisfies the conditions stated in Theorem ~\ref{thm:GAS}.

\textbf{Combined State and Parameter errors:}
Let $t_m$ be the start and end of cycle $m$. Define the combined error:
\begin{equation}
   \mathcal{E}_m:=\nu_\theta R(t_m)+\nu_x\|x(t_m)-\mathbf{d}\|^2
   \label{eq:combined-error-def}
\end{equation}
with $\nu_\theta>0$ weighting the parameter error defined by the Lyapunov function $R(t_m)$, and $\nu_x>0$ weighting the distance of the state to the target $\|x(t_m)-\mathbf{d}\|^2$. Now we go on to show the convergence of this combined error under Algorithm~\ref{alg:two-phase}.

\begin{proposition}[Convergence of combined error]\label{prop:delta-cycle-shrink}
Let $t_m$ and $t_{m+1}$ be the start of cycles $m$ and $m+1$ respectively. Under the definitions of combined error~\eqref{eq:combined-error-def} and minimum parameter error~\eqref{eq:theta-min-r-min},  
the state converges to the target: $\|x(t_m)-\mathbf d\|\to 0$ as $m\to\infty$ while the parameter error reaches a minimum $R(t_m)\to R_{\min}$; hence 
\(
\mathcal{E}_m \to \nu_\theta R_{\min}
\)
\end{proposition}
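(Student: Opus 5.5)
The proof treats the two summands of $\mathcal{E}_m$ separately and then adds the limits. The parameter part is governed by Lemma~\ref{lem:PE-contraction} during exploration. The state part is governed by the contraction argument of Theorem~\ref{thm:GAS} during exploitation and by the shrinking radius schedule \eqref{eq:alpha_delta}.

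\textbf{Step 1: parameter error.} During every exploration step of cycle $m<m_\star$, the control \eqref{eq:PE-control} makes $|y_j(t)|\ge \alpha_m/\underline{\lambda}_V$. Hence $F_t^\top F_t\succeq\alpha_m^2 I$, and Lemma~\ref{lem:PE-contraction} gives $R(t)\le(1-\kappa_m)R(t-1)$ with $\kappa_m>0$.

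During exploitation the estimate is not updated, since line~\ref{line:param-update} appears only inside the exploration loop. So $R$ is constant there. It follows that $R(t_m)$ is nonincreasing in $m$.

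For $m\ge m_\star$, exploration is never entered again, because $\alpha_m$ can only be recomputed as smaller once $\delta_m$ keeps shrinking. Therefore $R(t_m)=R(t^{\mathrm{end}}_{m_\star})=R_{\min}$ for all $m>m_\star$.

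I will check that $m_\star<\infty$. From \eqref{eq:alpha_delta}, $\delta_{m+1}\le \tfrac{\gamma}{2}\delta_m$. Also $u^{\max}(t)=\min_j 1/\theta_j^{\max}$ is bounded above, by $1/\min_j\theta_j$ as long as $\theta^{\max}\ge\theta$. Hence $\alpha_{m}\le \underline{\lambda}_V u^{\max}\delta_m\to0$, so $\alpha_m\le\alpha_{\min}$ after finitely many cycles. This gives $R(t_m)\to R_{\min}$, and in fact the sequence is eventually constant.

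\textbf{Step 2: state error.} In exploitation the input is $u_j=r(t)/\theta^{\max}_j$. Provided $\theta^{\max}_j\ge\theta_j$ (the conservative bound), this gives $0<\theta_j u_j\le r(t)<1$.

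In exploration the input is clipped to $[0,1/\theta_j]$. I will also need the product $\theta_j u_j$ to be bounded away from $0$, or at least the update to be nonexpansive. Then Theorem~\ref{thm:theorem_1} keeps $x\in[0,1]^n$, and the estimate $\|A(t)\|_\infty\le 1-\eta(t)\le1$ from \eqref{eq:at} shows that $\|x(t)-\mathbf d\|_\infty$ is nonincreasing along the whole trajectory.

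For the cycle-level bound, note that cycle $m+1$ begins after exploitation of cycle $m$ terminates. Termination happens when some component reaches $|x_j-d|<\gamma\delta_m$. To pass from this single component to the full norm $\|x(t_{m+1})-\mathbf d\|_\infty\lesssim\delta_m$, I will use the fact that exploitation contracts every component with the common factor $1-\eta$ under uniform-rate control. Once this holds, $\delta_m\to0$ geometrically (Step 1) gives $\|x(t_m)-\mathbf d\|\to0$ in any norm, by equivalence of norms on $\mathbb{R}^n$.

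A fallback for the same conclusion: for $m\ge m_\star$ only exploitation runs, and Theorem~\ref{thm:GAS} directly yields $x(t)\to\mathbf d$, so the subsequence $x(t_m)$ converges as well.

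\textbf{Step 3: combine.} Adding the two limits gives $\mathcal{E}_m=\nu_\theta R(t_m)+\nu_x\|x(t_m)-\mathbf d\|^2\to\nu_\theta R_{\min}+0$.

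\textbf{Main obstacle.} The hardest point is the state part after $m_\star$, where exploitation uses the rate $r(t)=ab^{t}$. Its product $\prod(1-ab^s)$ does not tend to zero, so Theorem~\ref{thm:GAS} as stated, which requires $\prod(1-\eta)=0$, does not directly apply. I would first try to invoke the restart structure: each cycle resets the schedule and requires $\gamma\delta_m\le|x_j-d|$. Failing that, I would replace the rate in the post-$m_\star$ phase by one with $\sum_t r(t)=\infty$ (e.g.\ a constant rate, as in the footnote's constant-input case). This gives $\prod(1-\eta(s))=0$, and convergence then follows from \eqref{eq:at}.

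A secondary point is justifying $\theta^{\max}\ge\theta$. This holds by definition, since $\theta^{\max}=\hat\theta+\theta^{\mathrm{err}}=\theta$ exactly. In practice one would substitute the bound \eqref{eq:thetaerrbound}, and I will note that this preserves the inequality.
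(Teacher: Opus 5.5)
\textbf{Verdict: genuine gap in Step 2 (the state part), and it cannot be closed for Algorithm~\ref{alg:two-phase} as written.} Your Steps 1 and 3 match the paper's proof: $R$ contracts via Lemma~\ref{lem:PE-contraction} during exploration, is not updated during exploitation, and equals $R_{\min}$ after $m_\star$. The failing step in Step 2 is also the one the paper's own proof leaves unsupported. The paper asserts $\|x(t_{m+1})-\mathbf d\|_\infty\le\delta_m$ once one coordinate enters the $\delta_m$ band, and then uses $\delta_m=c_\delta\|x(t_m)-\mathbf d\|_\infty$, which is not the schedule \eqref{eq:alpha_delta}.

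Your ``common factor $1-\eta$'' does not bridge this step. The bound $\|A(t)\|_\infty\le 1-\eta$ controls only the largest coordinate, while the loop exits are triggered by a single coordinate. Meanwhile $\delta_{m+1}=c_\delta\gamma\delta_m$ shrinks regardless of the state, so $\delta_m\to0$ says nothing about $\|x(t_m)-\mathbf d\|$. For instance, if $x_1(0)=d\neq x_2(0)$, both inner loops are skipped in every cycle and $x(t_m)=x(0)$ for all $m$.

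The obstacle you flagged is also fatal rather than technical. With the rate $ab^{t}$ on the running clock, the state claim fails for every $x(0)>\mathbf d$ componentwise. Since $\theta^{\max}=\hat\theta+\theta^{\mathrm{err}}=\theta$, exploitation is $x'(t+1)=(1-ab^{t})Vx'(t)$ with $x'=x-\mathbf d$. Exploration lowers coordinate $j$ by $(\theta_j/\max_k\theta_k)\delta_m<x'_j$ before mixing, because $\alpha_m=\underline{\lambda}_V\delta_m/\max_k\theta_k$. Since $V\ge0$ is row-stochastic, $x'(t)>0$ throughout, and in exploitation $\min_j x'_j(t+1)\ge(1-ab^{t})\min_j x'_j(t)$. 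Let $P_t:=\prod_{s\ge t}(1-ab^{s})>0$. Then an exploitation loop of cycle $m$ entered at $t_0$ with $\min_j x'_j(t_0)\ge\gamma\delta_m/P_{t_0}$ never exits.

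Exploration always terminates, since there $\eta\ge\min_j\theta_j\alpha_m/\underline{\lambda}_V$. So if no exploitation loop stalls before $m_\star$, then $m_\star$ is reached and later exits occur at times $t_1\to\infty$. At the first such $t_1$ with $P_{t_1-1}\ge c_\delta\gamma$, the next cycle's loop starts with $\min_j x'_j(t_1)\ge(1-ab^{t_1-1})\gamma\delta_m\ge\gamma\delta_{m+1}/P_{t_1}$, and it stalls. Either way $\min_j(x_j(t)-d)$ stays $\ge\gamma\delta_m>0$ and $m$ stays bounded. This also breaks your $m_\star<\infty$ argument, which presumes every cycle ends.

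The restart structure cannot rescue this. The clock is global, and even a per-cycle reset leaves $\prod_{k\ge0}(1-ab^k)>0$ per phase. The paper's appeal to Theorem~\ref{thm:GAS} fails for the same reason: its proof infers $\prod_s(1-\eta(s))=0$ from $\eta(s)\in(0,1)$ alone.

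Your remedy (b) points the right way, but it proves a statement about a modified algorithm, and it needs a second change. Take $r(t)\in[\underline r,1)$ with $\underline r>0$, and replace the exploitation test by $\|x-\mathbf d\|_\infty\ge\gamma\delta_m$. Then every loop terminates, $\|x(t_{m+1})-\mathbf d\|_\infty<\gamma\delta_m\to0$ holds by construction, and the rest of your argument goes through.
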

\begin{proof}
    During exploration ($\|x(t_m)-\mathbf d\|_\infty>\delta_m$), our control input as designed in Eq.~\eqref{eq:PE-control}, fulfills the conditions in Lemma~\ref{lem:PE-contraction}. Hence, the per step contraction is:
\(R(t_m{+}1)\ \le\ (1-\kappa_m) R(t_m)\) (from Eq.~\eqref{eq:rt-kappa}). After $K_\theta^{(m)}$ exploration steps and noting that in exploitation $\hat\theta$ is not updated (so $R$ does not increase),
\begin{equation}\label{eq:param-bound-proof}
R(t_{m+1})\ \le\ (1-\kappa_m)^{K_\theta^{(m)}} R(t_m).
\end{equation}
The state update satisfies the contraction condition (as in Theorem ~\ref{thm:GAS}), which implies
$\|x(t_m{+}1)-\mathbf d\|_\infty\le \|x(t_m)-\mathbf d\|_\infty$. Hence, once the trajectory enters the $\delta_m$ neighborhood, 
$\|x(t_{m+1})-\mathbf d\|_\infty\le \delta_m$, and using the transformation of $\ell_\infty$ norm to $\ell_2$ norm:
\begin{equation}\label{eq:state-bound-proof}
\|x(t_{m+1})-x^*\|^2\ \le\ n \delta_m^2.
\end{equation}
Multiplying \eqref{eq:param-bound-proof} by $\nu_\theta$ and \eqref{eq:state-bound-proof} by $\nu_x$ and adding gives
\[
\mathcal{E}_{m+1}\ \le\ \nu_\theta(1-\kappa_m)^{K_\theta^{(m)}}R(t_m)\ +\ \nu_x n \delta_m^2.
\]
Using $\delta_m=c_\delta\|x(t_m)-\mathbf d\|_\infty\le c_\delta\|x(t_m)-\mathbf d\|$ yields:
\begin{equation}
    \begin{aligned}
        \mathcal{E}_{m+1}
\ &\le\ (1-\kappa_m)^{K_\theta^{(m)}} \nu_\theta R(t_m)\ +\ n c_\delta^2 \nu_x \|x(t_m)-\mathbf{d}\|^2 
    \end{aligned}
\end{equation}
Finally, the updates on
$\alpha_{m+1}$ and $\delta_{m+1}$ using Eq.~\eqref{eq:alpha_delta} guarantee feasibility of inputs and
re-entry into exploration, thereby maintaining PE in the next cycle.

Now recall the definitions of $R_{\min}$ and $\theta^{\mathrm{err}}_{\min}$ from Eq.~\eqref{eq:theta-min-r-min}. For $m\ge m_\star$, parameter updates stop, so $R(t_m)=R_{\min}$, while the state update remains contractive with $\delta_m \to 0$ implying $\|x(t_m)-\mathbf d\|\to 0$. Consequently,
\(
\mathcal{E}_m=\nu_\theta R_{\min}+\nu_x\|x(t_m)-\mathbf d\|^2 \to \nu_\theta R_{\min}\).
\end{proof}
Thus, we have shown that our algorithm sufficiently estimates the parameter $\theta$, while driving the state to the target.

\begin{remark}
    By our result in ~\ref{thm:GAS}, estimating $\theta$ is not required to drive $x(t) \to \mathbf{d}$; it suffices to choose $u(t)$ so that the condition $\theta_i u_i(t) \in (0,1), \ \forall i, t$ is met. However, as the estimation error shrinks, we can tune $u(t)$ with greater accuracy which tightens the contraction factor and improves the rate of convergence. We demonstrate this through simulations in our next section.
\end{remark}
\section{SIMULATIONS}

\begin{figure}[t]
    \centering
    \includegraphics[width=0.67\linewidth]{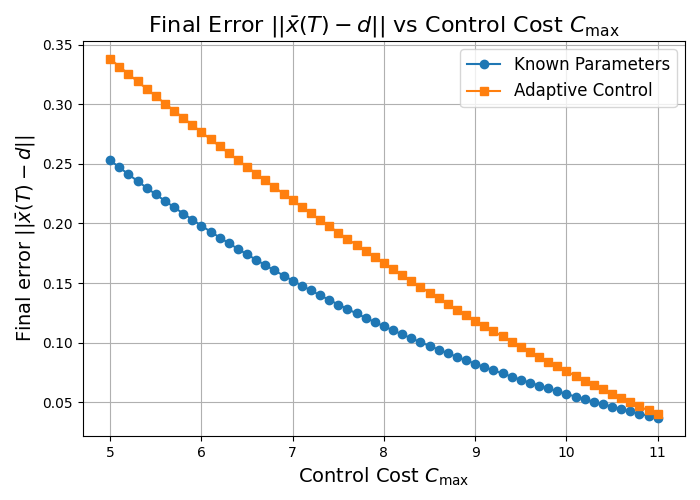}
        \caption{Final error $\epsilon(T)$ vs. control cost $c_u(T)$ under known $\theta$ vs. adaptive control using $\hat{\theta}(t)$.}
        \label{fig:cost_of_learning}
\end{figure}
\begin{figure}[t]
    \centering        \includegraphics[width=0.67\linewidth]{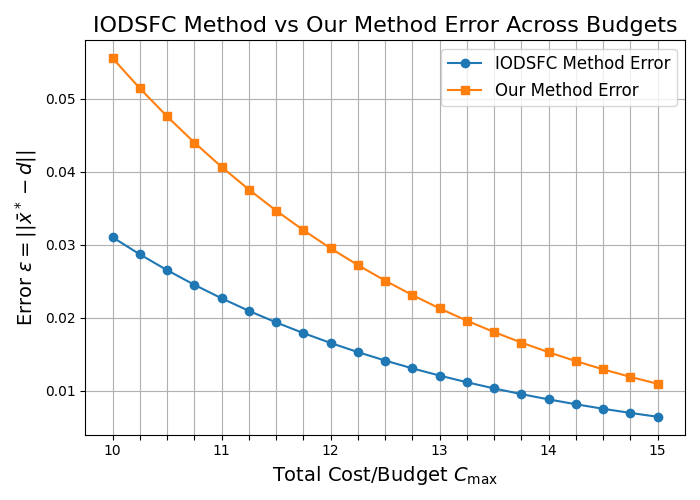}
        \caption{Final error $\epsilon$ across costs comparing the IODSFC baseline and our method.}
        \label{fig:iodsfcvsours}

\end{figure}
\begin{figure*}[t!]
    \centering
    \begin{subfigure}[t]{5.5cm}
        \centering
        \includegraphics[width=5.5cm,height=4cm]{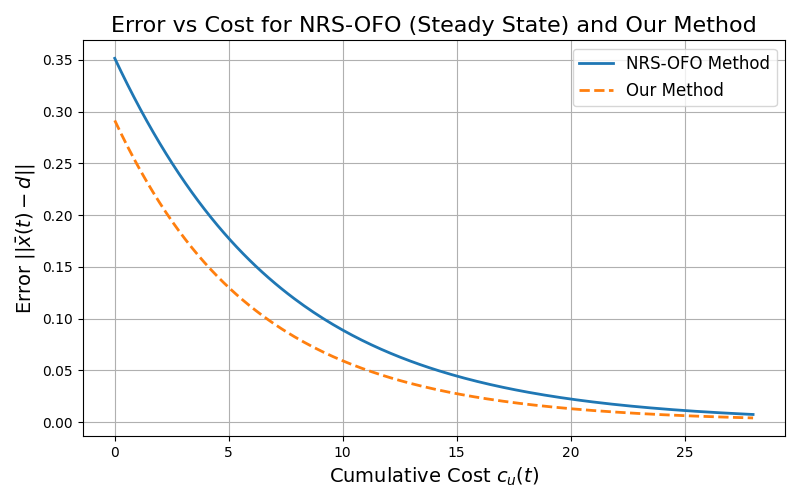}
        
        \caption{Step-wise Gradient descent implementation of NRS-OFO comparison}
        \label{fig:eth-comparison}
    \end{subfigure}\hfill
    \begin{subfigure}[t]{5.5cm}
        \centering
        \includegraphics[width=5.5cm,height=4cm]{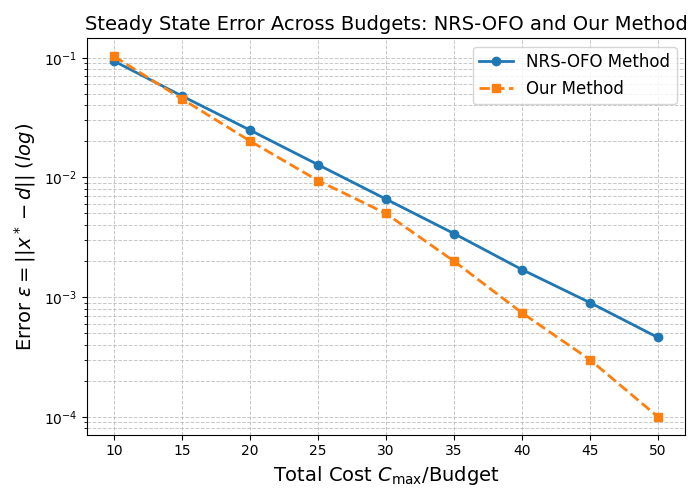}
        \caption{Steady state error across budgets.}
        \label{fig:eth-ss-budget}
    \end{subfigure}\hfill
    \begin{subfigure}[t]{5.5cm}
        \centering
        \includegraphics[width=5.5cm,height=4cm]{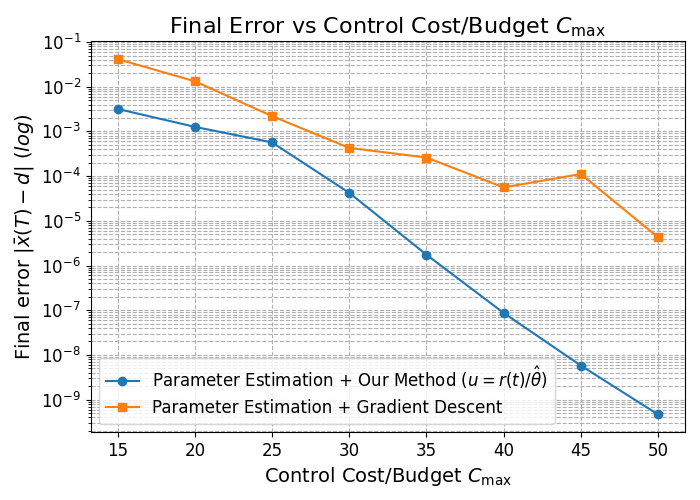}
        \caption{Adaptive Control with Gradient Descent vs Analytical Solving}
        \label{fig:eth-adaptive}
    \end{subfigure}
    \caption{Comparison of our method with the NRS-OFO method under different experimental setups. 
    (a) Step-wise Gradient Descent. (b) Steady State Errors under fixed budgets.
    (c) Parameter estimation of $\theta$ using~\eqref{eq:param-est-algo} with gradient descent versus with our proposed method.}
    \label{fig:eth-results}
\end{figure*}

\textbf{Cost of Parameter Estimation.} We now attempt to quantify the cost of learning the parameter $\theta$. As noted earlier, knowing $\theta$ although not necessary to drive $x(t) \to \mathbf{d}$, improves the rate of convergence. Given the same budget $C_{\max}$ for a fixed time horizon $T$, we can reach a lower $\| x(T) - \mathbf{d}\|$, if $\theta$ is known a priori, as opposed to when it must be estimated online- these results can be seen in Fig. \ref{fig:cost_of_learning}. In the \emph{Known Parameters} case, the control can be set optimally as $u_j(t) = r/\theta_j,\ \forall j$. However in the \emph{Adaptive Control} case the control  is set to $u_j(t) = r/\hat\theta_j(t), \ \forall j$, while $\hat{\theta}$ is updating using parameter identification. 
    The gap between the two curves quantifies the cost of learning: with exact $\theta$  we could achieve a lower error $\|x(T) - \mathbf{d}\|$ for the same control cost. 
    
\textbf{Comparisons with Related Work.} (I) Our setting closely aligns with that of \cite{kth} (Influencing Opinion Dynamics to promote Sustainable Food Choices, IODSFC), 
Under a fixed budget, the IODSFC method computes the \emph{ optimal} control via numerical optimization. In contrast, we derive a closed-form update for $u(t)$, avoiding optimization. For a fair comparison, since they assume known parameters, we assume likewise that the parameter $\theta$ is known for both methods in our simulation. Across varying budgets (Fig.~\ref{fig:iodsfcvsours}), our approach attains performance close to the optimal, with the gap narrowing as the budget increases. Notably, as opposed to IODSFC we also provide analytic convergence conditions (ref. Theorem~\ref{thm:GAS}) which enable this  control design rather than relying on optimization.

(II) The work by \cite{chandrasekaran_network-aware_2024} (Network-aware Recommender System via Online Feedback Optimization, NRS-OFO) proposes a projected-gradient control framework for recommender systems. For a fair comparison we assume in both methods that the state $x(t)$, and the system dynamics are known, and we implement their `Gradient Estimation \& Optimization' procedure (Section 3: Level III).
We adopt the loss function $\mathcal{L}(x(t))=\tfrac{1}{2}\|x(t)-\mathbf{d}\|_2^2$, and minimize it via w.r.t. the control input $U$ which requires the steady-state Jacobian $\nabla_U x^*$.

The NRS-OFO method updates the control input in intervals, and lets the system converge to a steady state in between updates (the control input is constant within each interval). In our model, the only steady state (under bounded $u(t)$) is $x^* = \mathbf{d}$, though in other frameworks the steady state may be slow to reach or misaligned with the target.
To accurately represent this in our simulation of the model, we compute $U(t)$ via gradient descent and keep it constant until convergence or until the cumulative cost of control $c_u(T)$ reaches the budget $C_{\max}$.

As shown in Fig.~\ref{fig:eth-comparison}, for a given budget $C_{\max}=15$ our method converges faster and attains a lower error.
In Fig.~\ref{fig:eth-ss-budget}, we show the steady-state error i.e. $\|x^*-\mathbf{d}\|$ across budgets $C_{\max}\in\{10,15,\dots,50\}$ for both methods. Our method achieves lower error for the same cost because it designs $U(t)$ to satisfy a budget-constrained error guarantee, whereas NRS–OFO does not explicitly incorporate budget.
\noindent Finally, in Fig.~\ref{fig:eth-adaptive} we present an ablation between two online controllers that both use parameter estimation (as in section 4) but differ in their controllers: (i) gradient-descent updates to $U$, and (ii)  our analytic control $u_j=r(t)/\hat{\theta_j}, \ \forall j$. Both estimate $\hat{\theta}$ online while steering opinions toward $\mathbf{d}$, but our method converges faster and achieves a lower steady-state error.
Therefore, our approach outperforms NRS–OFO on achieving lower error for a given cost, while NRS–OFO’s advantage is that it can operate without explicit knowledge of the system dynamics, which our method requires.
\section{CONCLUSIONS}
We presented an online control framework for opinion dynamics with unknown influence parameters, establishing theoretical guarantees on convergence of both the opinion trajectory and parameter estimates. Our analysis gives explicit contraction conditions for target convergence.  In benchmarks against recent methods, our work performs near-optimally while providing formal guarantees for the analytic control design, and does not incur the computational overhead of optimization. Moreover, by focusing on a specific opinion-dynamics model, we achieve faster convergence to the target and direct estimation of the underlying parameters designed for a specified budget, rather than treating them implicitly within model-agnostic approaches.
Future work will include extending estimation beyond the susceptibility parameter to jointly infer the interaction graph and its related parameters. We will also generalize the control and analysis to heterogeneous targets i.e. $\mathbf{d} \neq d\,\mathbf{1}$ for any scalar $d$, and designing budget-aware controllers that ensure progress under non-uniform objectives.



\bibliographystyle{IEEEtran}  
\bibliography{references} 
\addtolength{\textheight}{-12cm}   




\end{document}